\newtheorem{theorem}{Theorem}
\newtheorem{cor}[theorem]{Corollary}
\title{Burning Number for the Points in the Plane\thanks{This work is supported in part by the Natural Sciences and Engineering Research Council of Canada (NSERC).}}
\author{J. Mark Keil} 
\author{Debajyoti Mondal}
\author{Ehsan Moradi}
\affil{Department of Computer Science, University of Saskatchewan, Saskatoon, Canada\\
  \texttt{\{mark.keil,d.mondal,ehsan.moradi\}@usask.ca}}
\begin{document}
\maketitle  
\begin{abstract}
The burning process on a graph $G$ starts with a single burnt vertex, and at each subsequent step, burns the neighbors of the currently burnt vertices, as well as one other unburnt vertex. The burning number of $G$ is the smallest number of steps required to burn all the vertices of the graph. In this paper, we examine the problem of computing the burning number in a geometric setting. The input is a set of points $P$ in the Euclidean plane. The burning process starts with a single burnt point, and at each subsequent step, burns all the points that are within a distance of one unit from the currently burnt points and one other unburnt point. The burning number of $P$ is the smallest number of steps required to burn all the points of $P$. We call this variant \emph{point burning}. We consider another variant called \emph{anywhere burning}, where we are allowed to burn any point of the plane. We show that point burning and anywhere burning problems are both NP-complete, but $(2+\varepsilon)$  approximable for every $\varepsilon 
>0$. Moreover, if we put a  restriction on the  number of burning sources that can be used, then the anywhere burning problem becomes NP-hard to approximate within a factor of $\frac{2}{\sqrt{3}}-\varepsilon$. 
\end{abstract}


\section{Introduction} 
Graph burning is a discrete process that propagates fire to burn all the nodes in a graph. In particular, the fire is initiated at a vertex of the graph and at each subsequent step, the fire propagates to the neighbors of the currently burnt vertices and a new unburnt vertex is chosen to initiate a fire. The vertices where we initiate fire are called the \emph{burning sources}. The burning process continues until all the vertices are burnt. The \emph{burning number} of a graph $G$ is the minimum number of steps to burn all its vertices. 
Bonato et al.~\cite{DBLP:conf/waw/BonatoJR14} introduced graph burning as a model of social contagion. The problem is NP-Complete even for simple graphs such as a spider or forest of paths~\cite{Bessy2017}.

In this paper we introduce burning number for the points in the plane. We consider two methods for burning:  \emph{point burning} and \emph{anywhere burning}. Both problems take a set of points $P$ as an input, and seek for the minimum number of steps to burn all points of $P$. 

In the point burning model, we can initiate  fire only at the given points. The burning process starts by burning one given point, and then at each subsequent step, the fire propagates to all unburnt points of the plane that are within one unit of any burnt point of the plane and a new unburnt given point is chosen to initiate the fire.  Figure~\ref{intro}(top) illustrates this model. Note that we may not have an unburnt vertex at the last step. 

In the anywhere burning model, we can start a fire anywhere on the plane, and at each subsequent step, the fire propagates to all unburnt points of the plane that are within one unit of any burnt point of the plane, and a new unburnt point is chosen to initiate the fire. Figure~\ref{intro}(bottom) illustrates this model. 

\begin{figure}[pt]
    \centering
\includegraphics[width=.6\textwidth]{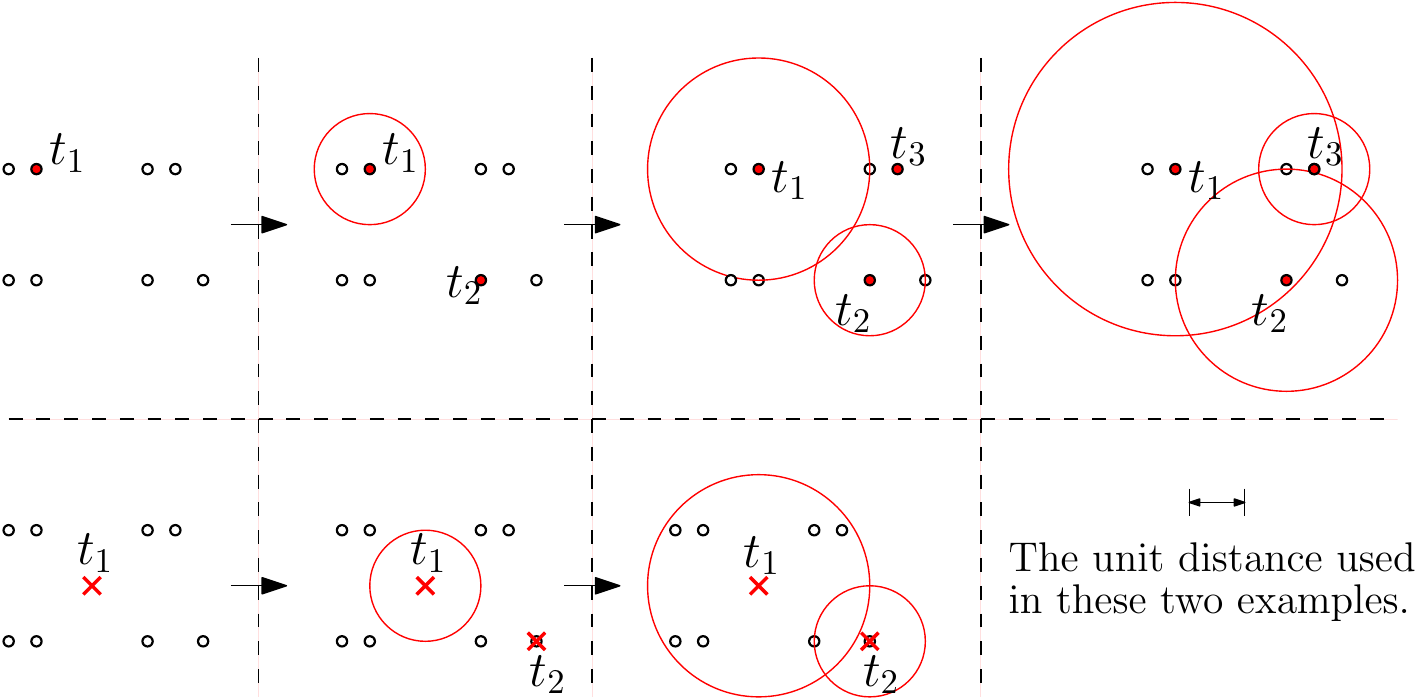}
    \caption{Illustration for (top) point burning and (bottom) anywhere burning. The burning sources are illustrated in labelled dots and cross marks, respectively. }
    \label{intro}
\end{figure}

In addition to being a natural generalization of graph burning,  our proposed burning processes may potentially be used to model supply chain systems. A hypothetical example of how a burning process may model  a  supply chain management system is as follows. Consider a business that needs to maintain a continuous supply of perishable  goods to a set of $P$ locations. Each day it can manage to send one large shipment to a hub location that distributes the goods further to the nearby locations over time. The point burning considers only the points of $P$ as potential hubs, whereas anywhere burning allows to create a hub at any point in the plane. The burning number indicates the minimum number of days needed to distribute the goods to all locations. For example, in Figure~\ref{intro}(top), the hubs are $t_1$, $t_2$, and $t_3$, and  the business can keep sending the shipments to the hubs after  every three days in the same order.

\subsection{Related Results}
{\color{black}

Finding the graph burning number is NP-Hard~\cite{Bessy2017}, but approximable within a factor of 3~\cite{bonato2019approximation}. These results have been improved very recently. Garc\'ia-D\'iaz et al.~\cite{DBLP:journals/access/Garcia-DiazSRC22} have given a $(3-2/b)$-approximation algorithm where  $b$ is the burning number of the input graph. Mondal et al.~\cite{DBLP:conf/walcom/MondalP0R21} have shown the graph burning problem to be APX-hard,  even in a generalized setting where $k=O(1)$ vertices can be chosen to initiate the fire at each step. They gave a 3-approximation algorithm for this generalized  version~\cite{DBLP:conf/walcom/MondalP0R21}. Since the introduction of the graph burning problem~\cite{Bonato2016}, a rich body of literature examines the upper and lower bound on the graph burning number for various classes of graphs~\cite{Sim2017,liu2021burning,das2018burning} as well as the parameterized complexity of computing the burning number~\cite{DBLP:conf/iwpec/KobayashiO20}. We refer the reader to~\cite{DBLP:journals/cdm/Bonato21} for a survey on graph burning.

Researchers have also explored burning number for geometric graphs. Gupta et al.~\cite{DBLP:conf/caldam/GuptaLM21} examined   square grid graphs and gave a 2-approximation algorithm for burning square grids. They also showed the burning number to be NP-Complete for connected interval graphs. Bonato et al.~\cite{DBLP:journals/gc/BonatoGS20} considered the burning process on dynamic graphs, which are growing grids in the Cartesian plane with the center at the origin. They explore the proportion or density of burned vertices relative to the growth speed of the grid. Recently, Evans and Lin~\cite{w2022} have introduced polygon burning, where given a polygonal domain and an integer $k$, the problem seeks for $k$ vertices such that the polygonal domain is burned as quickly as possible when burned simultaneously and uniformly from those $k$ vertices. They gave a 3-approximation algorithm for polygon burning.

 The anywhere burning problem that we introduced can be seen to be related to the nonuniform version of the \emph{$k$-center} problem. Given a set of points, the goal of the $k$-center  problem is to find the minimum radius $R$ and a placement of $k$ disks of radius $R$ to cover all the given points. In the \emph{nonuniform $k$-center} problem~\cite{DBLP:conf/icalp/ChakrabartyGK16}, given a set of points and a set of $k$ numbers ${r_0\ge \ldots \ge r_{k-1}}$, the goal is to find a minimum dilation $\alpha$ and a placement of $k$ disks where the $i$th disk, $1\le i \le k$, has radius $\alpha r_i$ and all the given points are covered. If the anywhere burning number of a set of points is $k$, then the nonuniform $k$-center problem with $r_i = i$ admits a solution with $\alpha=1$. 

}
  
\subsection{Our Contribution}
We introduce two discrete-time processes (i.e., point burning and anywhere burning)  to burn the points in the plane, which naturally extend the graph burning model to the geometric setting.   We prove that in both models,  computing the burning number is NP-hard, and 
give polynomial-time $(2+\varepsilon)$-approximation algorithms. 
We then show that if we put a restriction on the number of burning sources that can be used, then the anywhere burning problem becomes NP-hard to approximate within a factor of   $\frac{2}{\sqrt{3}}-\varepsilon$.

\section{Approximating Burning Number}

\subsection{Point Burning}\label{apppb}

The \emph{burning sources at the $i$th step} are all the vertices that we choose to initiate the fire from the beginning of the burning process to the $i$th step (including the $i$th step). We refer to the number of burning sources as $B_i$. The \emph{maximum burning radius} $R_i$ at the $i$th step of the burning process is the maximum radius over all burning sources. After the $i$th step of the burning process, the maximum burning radius is exactly $(i-1)$ and the number of burning sources is exactly $i$ (except possibly for the last step). Therefore, if $\delta^*$ is the number of steps in the optimal solution, then the number of burning sources is at most $\delta^*$, and the maximum burning radius is exactly $(\delta^*-1)$. Hence for the $i$th step, we have the following.
\begin{equation}\label{eq}
    \delta^* \ge i \ge B_i. 
\end{equation}

\begin{theorem}\label{thmpb}
Given a set $P$ of points in $\mathbb{R}^2$ and an $\varepsilon>0$, one can compute a point burning sequence for $P$ in polynomial time such that the length of the sequence is at most $(2+\varepsilon)$ times the point burning number of $P$.
\end{theorem}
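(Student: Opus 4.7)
The plan is to reduce point burning to an approximate geometric disk cover via binary search on the burning number together with a ``doubling'' trick that produces the factor of $2$ in the approximation ratio.

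First I would reformulate the problem: a length-$m$ point burning sequence is exactly a tuple $(s_1,\ldots,s_m)\in P^m$ satisfying $P\subseteq \bigcup_{j=1}^{m} B(s_j,\,m-j)$, where $B(c,r)$ denotes the closed Euclidean disk of radius $r$ centred at $c$. Computing the point burning number is therefore equivalent to minimising $m$ over such tuples. I would then binary search on the guess $k\in\{1,\ldots,|P|\}$ of the optimum $\delta^*$; for each $k$ set $m=\lceil(2+\varepsilon)k\rceil$ and attempt to construct a valid length-$m$ sequence. Returning the sequence from the smallest successful $k$ yields length at most $(2+\varepsilon)\delta^*$, provided the inner construction is guaranteed to succeed whenever $k\ge\delta^*$.

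The inner construction is where the factor of $2$ enters. By \eqref{eq}, the optimal cover consists of $\delta^*$ disks centred at points of $P$ with radii $\delta^*-1,\ldots,0$. In a sequence of length $m\approx 2k$, the radius at odd step $2j-1$ is $m-(2j-1)\ge 2(k-j)+1\ge k-j$, so placing the (unknown) optimal source $s_j^{*}$ at our step $2j-1$ would produce a disk that encloses the entire $j$-th optimal disk. Hence the odd-indexed sources alone would cover $P$ if we could locate them; the even-indexed slots can be filled with arbitrary unburnt points and provide the ``slack'' that lets us search for good odd-step centres. The algorithm I would use is a greedy selector: at each step $i$, choose $s_i\in P$ maximising the number of as-yet-uncovered points of $P$ inside $B(s_i,\,m-i)$. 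The extra $\varepsilon k$ steps beyond $2k$ absorb residual imperfection in the greedy choices and, if needed, allow a PTAS-style disk-cover refinement (e.g.\ local search on disks) whose multiplicative $(1+\varepsilon/2)$ overhead composes with the factor $2$ to yield the claimed ratio.

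The main obstacle I anticipate is proving that this procedure always covers $P$ whenever $k\ge\delta^*$. Since plain greedy for set cover is only $O(\log n)$-competitive, the argument must combine the geometric structure of disks, the monotonically shrinking radii along the sequence, and the $(1+\varepsilon/2)$ slack between $m$ and $2k$ simultaneously. I expect the key lemma to be an exchange argument: if some $p\in P$ remained uncovered at the end, then at an earlier odd step the greedy would have ``lost ground'' to the hypothetical choice $s_j^{*}\in P$, contradicting greedy's maximality once one carefully accounts for how many uncovered optimal disks can still be present at that step. Bounding the running time is then routine: binary search contributes an $O(\log|P|)$ factor, each inner step inspects $O(|P|)$ candidate centres against $O(|P|)$ uncovered points, and the total is polynomial in $|P|$ and $1/\varepsilon$.
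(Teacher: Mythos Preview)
Your proposal has a genuine gap. You acknowledge it yourself: the correctness of the greedy selector is the ``main obstacle,'' and you only sketch an exchange argument without carrying it out. Plain greedy for set cover is $\Theta(\log n)$-competitive, and nothing in your setup (monotone radii, Euclidean disks, the extra $\varepsilon k$ slack) is shown to beat that bound. The vague fallback to ``a PTAS-style disk-cover refinement'' is where the real work would have to happen, but you do not say which problem the PTAS is applied to, nor how its output is threaded into a burning sequence with strictly decreasing radii. As written, the inner construction is not guaranteed to cover $P$ whenever $k\ge\delta^*$, so the outer search has nothing reliable to test.

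The paper sidesteps all of this by relaxing the radii upward instead of trying to exploit their decrease. Since every disk in an optimal length-$\delta^*$ sequence has radius at most $\delta^*-1$, the optimal sources form a dominating set in the unit disk graph $G_{\delta^*-1}$ on $P$ (edges between points at distance at most $\delta^*-1$), hence $|D_{\delta^*-1}|\le\delta^*$. For each guess $\delta$ one computes a $(1+\varepsilon)$-approximate minimum dominating set $D'_{\delta-1}$ of $G_{\delta-1}$ via the known PTAS for unit disk graphs, and stops at the first $\delta$ with $|D'_{\delta-1}|/(1+\varepsilon)\le\delta$; then one burns the points of $D'_{\delta-1}$ in arbitrary order, followed by $\delta-1$ further steps so that every source's fire reaches radius $\delta-1$. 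The total length is $|D'_{\delta-1}|+(\delta-1)\le(1+\varepsilon)\delta+\delta\le(2+\varepsilon)\delta^*$. The idea you are missing is precisely this relaxation to equal radii, which collapses the inner step to a problem with an off-the-shelf PTAS and reduces the analysis to two lines.
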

\begin{proof}
Let $G_k$ be a unit disk graph where $k/2$ equals one unit, i.e.,  each vertex of $G_k$  corresponds to a disk of radius $k/2$ in $\mathbb{R}^2$, and there is an edge between two vertices of $G_k$ if their corresponding disks intersect. 
Consider the graph $G_k$ 
on $P$ where $P$ represents the centers of the disks. We denote by $D_k$ a \emph{minimum dominating set} of $G_k$, i.e, the smallest set of vertices such that each vertex of  $G_k$  is either in $D_k$ or a neighbor of a vertex in $D_k$. There exists PTAS  to approximate $D_k$
~\cite{DBLP:conf/waoa/NiebergH05}, i.e., $D_k$ is approximable within a factor of $(1+\varepsilon)$ for every fixed $\varepsilon>0$.

Let $\delta^*$ be the burning number for $P$.  We now claim that  $\delta^*$  must be at least $|D_{\delta^*-1}|$. Suppose for a contradiction that the burning number is strictly smaller than $|D_{\delta^*-1}|$ and let $S$ be the corresponding burning sources. Since the maximum burning radius over $S$ is at most $(\delta^*-1)$, we could use $|S|$ disks, each of radius $(\delta^*-1)$ to burn all the points. Hence, we could choose the disks corresponding to $S$ as a dominating set for $G_{\delta^*-1}$. This contradicts that $D_{\delta^*-1}$ is a minimum dominating set. Hence we have 
    $\delta^*\ge |D_{\delta^*-1}|$.

We now iteratively guess the  burning number $\delta$ from $1$ to $n$, where $|P|=n$. For each $\delta$, we construct $G_{\delta-1}$, and compute  a $(1+\varepsilon)$ approximation $D'_{\delta-1}$ for $D_{\delta-1}$. If $\frac{|D'_{\delta-1}|}{(1+\varepsilon)}$, i.e., the lower bound on the burning sources, is strictly  larger than $\delta$, then it violates Equation~\ref{eq} and our guess can be increased. We stop as soon as we have $\frac{|D'_{\delta-1}|}{(1+\varepsilon)}\le \delta$. Since none of the previous guesses were successful, here we know that  $\delta^*\ge  \delta$.
 
To burn $P$, 
we first choose $D'_{\delta-1}$ as the burning sources and burn them in arbitrary order. We then keep burning another $(\delta-1)$ steps (or, stop early if all points are burnt). Since all the points are within the distance $(\delta-1)$ from some point in $D'_{\delta-1}$, all the points will be burnt. Since $|D'_{\delta-1}| \le (1+\varepsilon) \delta$, and since $\delta^*\ge  \delta$, the length of the burning sequence we compute is  $|D'_{\delta-1}|+(\delta-1) \le  (1+\varepsilon) \delta^* + \delta^* =  (2+\varepsilon) \delta^*$.
\end{proof}

\subsection{Anywhere Burning}

We   leverage the discrete unit disk cover problem to obtain a $(2+\varepsilon)$-approximation for anywhere burning. The input of a \emph{discrete unit disk cover}  problem is a set of points $P$ and a set of unit disks $\mathcal{U}$ in $\mathbb{R}^2$, and the task is to choose the smallest set $U\subseteq \mathcal{U}$ that covers all the points of $P$. There exists a PTAS for the discrete unit disk cover problem~\cite{DBLP:journals/dcg/MustafaR10}. 

We relate the discrete unit disk cover problem to anywhere burning  using the observation that there exists an optimal anywhere burning sequence  where each burning source either coincides with a given point or lies at the center of some circle determined by two or three given points. More specifically, consider a burning source $q$  with a burning radius   $r$ in an optimal anywhere burning process. Let $S$ be the set of points burned by $q$. Let $C$ be the smallest circle that covers all the points of $S$. Then we could choose a burning source at the center of $C$ instead of at $q$ and burn all points of $S$.  

A $(2+\varepsilon)$-approximation for anywhere burning problem can now be obtained by iteratively guessing the anywhere burning number using the same technique as in Section~\ref{apppb} but using an approximation to the discrete unit set cover problem. 

\begin{theorem}\label{thmab}
Given a set $P$ of points in $\mathbb{R}^2$ and an $\varepsilon{>}0$, one can compute an anywhere burning sequence in polynomial time such that the length of the sequence is at most $(2+\varepsilon)$ times the anywhere burning number of $P$.
\end{theorem}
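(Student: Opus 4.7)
The plan is to mimic the structure of the proof of Theorem~\ref{thmpb}, but with the discrete unit disk cover problem playing the role that minimum dominating set of a unit disk graph played there. First I would note that Equation~\ref{eq} carries over verbatim to anywhere burning: after the $i$th step the maximum burning radius is exactly $(i-1)$ and at most $i$ sources have been placed, so if $\delta^*$ is the optimal anywhere burning number then $\delta^*\ge i \ge B_i$ still holds. This gives the same handle on the lower bound.

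Next I would exploit the structural observation stated just before the theorem: in an optimal anywhere burning sequence, every source can be taken to be the center of the smallest enclosing circle of the set of points it eventually burns, and such a center is determined by two or three points of $P$. So the universe of candidate source locations can be restricted to the $O(n^2)+O(n^3)$ circumcenters of pairs and triples from $P$ (together with the points of $P$ themselves, though these are subsumed). For a guessed value $\delta$, I would form the family $\mathcal{U}_\delta$ of disks of radius $\delta$ centered at these candidate locations, scale everything by $1/\delta$ so that these become unit disks, and invoke the PTAS for discrete unit disk cover from~\cite{DBLP:journals/dcg/MustafaR10} to compute a set $U'_\delta\subseteq\mathcal{U}_\delta$ of size at most $(1+\varepsilon)|U_\delta|$ covering all of $P$, where $U_\delta$ is the optimal such cover.

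I would then run the same guessing loop as in Theorem~\ref{thmpb}: try $\delta=1,2,\ldots,n$, and stop at the first value with $|U'_\delta|/(1+\varepsilon)\le \delta$. The same lower bound argument shows $\delta^*\ge |U_\delta|$ at the stopping point, hence $\delta^*\ge \delta$: indeed, if the anywhere burning number were strictly smaller than $|U_\delta|$, the (structurally normalized) optimal sources would yield a cover of $P$ by at most $\delta^*-1 < |U_\delta|$ disks of radius at most $\delta-1<\delta$ from $\mathcal{U}_\delta$, contradicting minimality of $U_\delta$. The algorithm then outputs $U'_\delta$ as the first $|U'_\delta|$ burning sources (in arbitrary order), and runs for another $\delta-1$ steps to grow each disk to radius $\delta-1$ and cover all remaining points.

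The length of the produced sequence is $|U'_\delta|+(\delta-1)\le (1+\varepsilon)\delta+\delta \le (2+\varepsilon)\delta^*$, as required. The main obstacle I anticipate is justifying the lower bound $\delta^*\ge |U_\delta|$ cleanly, because the optimal burning sources need not lie at circumcenters of points of $P$ a priori; this is exactly where the structural observation about smallest enclosing circles (stated in the paragraph preceding the theorem) is used, and one must verify that shrinking each source to its enclosing-circle center does not invalidate the burning process (it cannot, since the set of points burned by each source only grows). Apart from this, the rest is bookkeeping identical to the proof of Theorem~\ref{thmpb}.
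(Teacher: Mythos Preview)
Your proposal is correct and follows essentially the same approach as the paper: guess $\delta$, build the candidate disk family at circumcenters of pairs and triples, invoke the discrete unit disk cover PTAS, stop at the first $\delta$ with $|U'_\delta|/(1+\varepsilon)\le\delta$, and finish by burning the chosen centers followed by $\delta-1$ extra steps. The only imprecision is in your phrasing of the lower bound: the inequality you actually need is $\delta^*\ge |U_{\delta^*}|$ (not $\delta^*\ge |U_\delta|$ at the stopping $\delta$), from which $\delta\le\delta^*$ follows because the loop would have halted at $\delta^*$ if not earlier---but this is exactly the ``same lower bound argument'' from Theorem~\ref{thmpb} that you invoke.
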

\begin{proof} Let $P$ be the input to the anywhere burning problem. Note that Equation~\ref{eq} holds also for anywhere burning. 
We now iteratively guess the anywhere burning number $\delta$ from $1$ to $n$, where $|P|=n$. For each $\delta$, we construct   a set 
of ${{n}\choose{3}} + {{n}\choose{2}}$ disks, where each disk is of radius $\delta$ and is centered at the center of a circle determined by either two or three points of $P$. We compute a $(1+\varepsilon)$-approximation $U'_\delta$ for the  discrete unit disk cover $U_\delta$.  If $\frac{|U'_{\delta}|}{(1+\varepsilon)}$, i.e., the lower bound on the burning sources, is strictly  larger than $\delta$, then it violates Equation~\ref{eq} and our guess can be increased. We stop as soon as we have $\frac{|U'_{\delta}|}{(1+\varepsilon)}\le \delta$. Here we know that  $\delta^*\ge  \delta$.
 
To burn all the points of $P$, we first choose $U'_{\delta}$ as the burning sources and burn them in arbitrary order. We then keep burning another $(\delta-1)$ steps (or, stop early if all points are burnt). Since all the points are within the distance $(\delta-1)$ from some point in $U'_{\delta}$, all the points will be burnt. Since $|U'_{\delta}| \le (1+\varepsilon) \delta$, and since $\delta^*\ge  \delta$, the length of the burning sequence we compute is  $|U'_{\delta}|+(\delta-1) \le  (1+\varepsilon) \delta^* + \delta^* =  (2+\varepsilon) \delta^*$.
\end{proof}

\section{NP-hardness}
 



\subsection{Point Burning}\label{sec:pbh}
Consider a decision version of the point burning problem where given a set of points and an integer $b$, the task is to decide whether there is a burning sequence that burns all the points in at most $b$ steps. This decision version of the point burning problem is in NP because given a sequence of burning sources, in polynomial time one can simulate the burning process to check whether all the points are burnt. We now consider the hardness. 

  The graph burning number problem is NP-hard even for a forest of paths~\cite{DBLP:conf/waw/BonatoJR14}. To prove the NP-hardness of the point burning one can easily reduce the path forest burning problem into the point burning problem as follows. 

Let $I$ be an instance of the path forest burning problem and let $L_1,\ldots, L_t$ be the paths in $I$. We draw the vertices of each path $L_i$, $1\le i\le t$, along the x-axis in the (left-to-right) order they appear on the path with unit length distance between consecutive vertices. We ensure a gap of $(2n+1)$ units between consecutive paths, where $n$ is the number of vertices in the forest. The point burning number for the vertices of the paths is at most $n$. Since we can only burn the points (equivalently, vertices) in the point burning model, any point burning process can be seen as a graph burning and vice versa. Hence we have the following theorem.

\begin{theorem}
The point burning problem is NP-complete.
\end{theorem}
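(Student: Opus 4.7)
The plan is to prove NP-completeness in two pieces: membership in NP and a polynomial-time reduction. For membership, I would observe that given a candidate sequence of at most $b$ burning sources, we can simulate the burning process in polynomial time by maintaining, at each step $i$, the set of burnt points and checking for every remaining point whether its Euclidean distance to some previously chosen source is at most $(i-1)$. Since there are at most $n$ points and at most $b\le n$ steps, the verification is polynomial.

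For hardness, I would reduce from the burning number problem on a forest of paths $L_1,\ldots,L_t$ with $n$ vertices in total, as indicated in the excerpt. The construction places the vertices of each $L_i$ on the x-axis at unit spacing in their path order, and separates consecutive paths by a horizontal gap of $(2n+1)$ units. The point burning number of the resulting point set $P$ is trivially at most $n$ (burn each point in any order), so we only need to consider candidate burning sequences of length at most $n$; in any such sequence, the maximum burning radius never exceeds $(n-1)$.

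The key equivalence step is to argue that a burning sequence of length $b\le n$ exists for the forest if and only if one exists for $P$. In one direction, given a graph burning sequence, the same choice of vertices (viewed as points) burns $P$ in the same number of steps: within a single path, the Euclidean distance between two of its vertices equals their graph distance, so a source placed at a path vertex covers exactly the same vertices as in the graph. In the other direction, because the inter-path gap is $(2n+1) > 2(n-1)$, no burning source placed on a point of $L_i$ can ever reach a point of $L_j$ for $j\ne i$ within $(n-1)$ steps; hence fire propagates independently along each path, exactly as in the graph burning model. Since the point burning model forces sources to lie at given points, the correspondence between burning sequences is bijective and length-preserving.

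The main obstacle is handling the geometric propagation carefully: one has to rule out the possibility that a source placed on one path could shortcut to another path through the plane, and also the possibility that burning within a path behaves differently geometrically than along the graph (for example, that two non-adjacent vertices of the same path might be reached in fewer steps because they lie along the x-axis rather than along edges). The first is ruled out by the $(2n+1)$ gap combined with the bound $R_i\le n-1$; the second is not an issue because graph distance along the path equals the Euclidean distance between the corresponding collinear, unit-spaced points. Once these two observations are in place, the reduction is polynomial, preserves the burning number, and yields NP-hardness, completing the proof.
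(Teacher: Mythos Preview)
Your proposal is correct and follows essentially the same approach as the paper: NP membership via simulating the burning process, and NP-hardness via the same reduction from burning a forest of paths, embedding each path on the $x$-axis with unit spacing and an inter-path gap of $(2n+1)$; you even spell out the two equivalence checks (Euclidean distance equals path distance within a path, and the gap exceeds the maximal radius so no cross-path fire) that the paper leaves implicit. One small slip: in your NP verification, the radius of the source chosen at step $j$ at time $i$ is $(i-j)$, not $(i-1)$, so the distance test should be source-specific rather than uniform.
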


\subsection{Anywhere Burning}
\begin{figure}[h]
    \centering
\includegraphics[width=.9\textwidth]{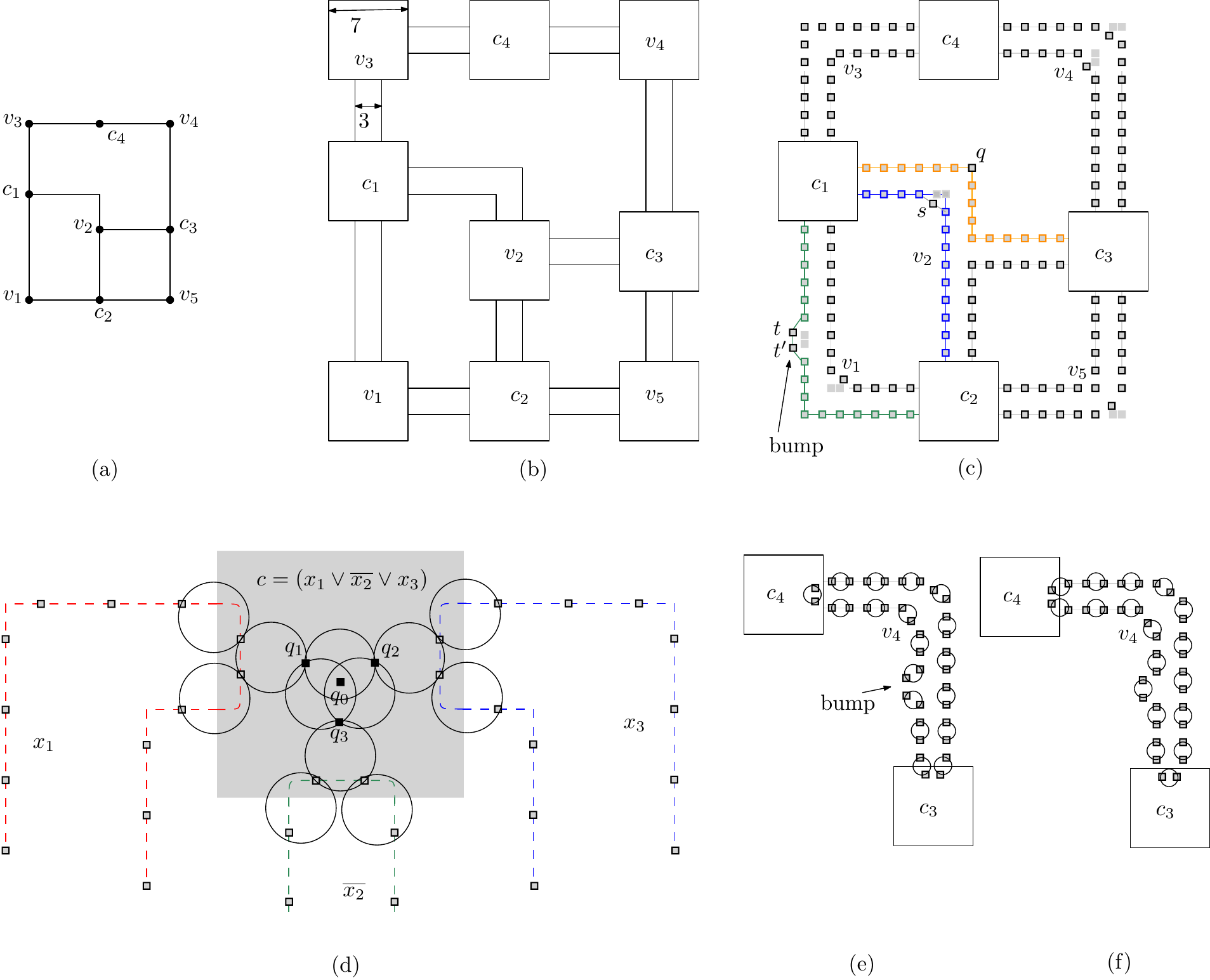}
    \caption{Illustration for the construction of the point set $P$.  }
    \label{hard}
\end{figure}

Similar to point burning, the decision version of the anywhere burning problem is in NP because given a sequence of burning sources, in polynomial time one can simulate the burning process to check whether all the points are burnt.

To show the hardness we can use almost the same hardness reduction that we used for point burning. Let $I$ be an instance of the path forest burning problem and let $P$ be the corresponding point set we constructed in Section~\ref{sec:pbh}. If the burning number for $I$ is $b$, then we can simulate the same burning process to burn all points of $P$ in $b$ steps. If $P$ admits an anywhere burning within $b$ steps, then for each burning source $q$, we  choose the nearest point $q'$ of $P$ as the burning source. By the construction of $P$, the distance between $q$ and $q'$ is at most $0.5$. Since the burning radius of $q$ is an integer, a burning source with the same radius at $q'$ will burn the same set of points as that of $q$. Therefore, if we now burn the chosen points of $P$ in the order corresponding to the anywhere burning sequence, this simulates a graph burning process on $I$ and  burns all the vertices within $b$ steps.  We thus have the following theorem.

\begin{theorem}
The anywhere burning problem is NP-complete.
\end{theorem}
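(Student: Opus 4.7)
The plan is to reduce from the path-forest burning problem, which is known to be NP-hard, using exactly the geometric embedding already constructed in Section~\ref{sec:pbh}. That is, given a forest instance $I$ with paths $L_1,\dots,L_t$ and $n$ vertices total, I place each path on the $x$-axis with unit spacing between consecutive vertices of the same path and a gap of $2n+1$ units between consecutive paths, obtaining the point set $P$. Membership in NP is immediate: a guessed sequence of burning-source coordinates (by Section 2.2 we may take each source to be a vertex of $P$, a midpoint of two vertices, or a circumcenter of three vertices, all of which have a polynomial-size description) can be simulated in polynomial time.

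The forward direction is straightforward: if $I$ admits a graph burning in $b$ steps, choosing the corresponding vertices of $P$ as anywhere-burning sources yields an anywhere burning of $P$ in $b$ steps, since graph distance on a path equals Euclidean distance on its embedding. The nontrivial direction is converting an anywhere-burning of $P$ in $b\le n$ steps into a graph burning of $I$ in $b$ steps. For each anywhere-burning source $q_i$ with final burning radius $R_i=b-i$, I would snap $q_i$ to the nearest vertex $q'_i\in P$ and show that the set of points of $P$ burned is unchanged.

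The heart of the proof is the distance-$\tfrac{1}{2}$ claim. First, I would invoke the observation from Section 2.2 that WLOG $q_i$ lies at the center of the smallest enclosing circle of the subset $S_i\subseteq P$ it burns; since $P$ is collinear, this circle is determined by the two extreme points of $S_i$, so $q_i$ has the form $(x,0)$ with $x$ either an integer or a half-integer. Next, since $R_i\le n-1$ is strictly smaller than the gap $2n+1$ between paths, $S_i$ lies entirely on a single path, so $q_i$ lies within distance $\tfrac{1}{2}$ of some vertex $q'_i\in P$ on that path. Finally, for any $p\in P$, the triangle inequality gives $|q'_i p|\le |q_i p|+\tfrac{1}{2}\le R_i+\tfrac{1}{2}$; but $|q'_i p|$ is a nonnegative integer (both coordinates are integers on the $x$-axis) and $R_i$ is an integer, so $|q'_i p|\le R_i$. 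The symmetric inequality gives the converse, so $q'_i$ burns exactly the same points of $P$ as $q_i$.

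I expect the main obstacle to be the careful handling of the distance-$\tfrac{1}{2}$ claim, which the paper essentially collapses into a single sentence. Three ingredients have to be separated cleanly: collinearity of $P$ (to eliminate the three-point circumcenter case and force $q_i$ onto the $x$-axis at a half-integer or integer), the $2n+1$ inter-path gap (to force each $S_i$ onto a single path), and the integrality of $R_i$ together with integer $x$-coordinates of $P$ (to upgrade the inequality $|q'_i p|\le R_i+\tfrac{1}{2}$ to $|q'_i p|\le R_i$). Once these are in place, replacing each $q_i$ by $q'_i$ yields a length-$b$ burning sequence on vertices of $P$, which, via the construction, corresponds to a graph burning of $I$ in $b$ steps, completing the reduction.
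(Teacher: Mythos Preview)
Your proposal is correct and follows essentially the same route as the paper: reduce from path-forest burning via the Section~\ref{sec:pbh} embedding, and for the backward direction snap each anywhere-burning source to its nearest point of $P$ and use integrality of radii and inter-point distances to conclude the same points are burned. Your treatment is in fact more careful than the paper's, which asserts ``the distance between $q$ and $q'$ is at most $0.5$'' without explicitly invoking the canonical-position WLOG from Section~2.2; you correctly isolate collinearity, the $2n+1$ gap, and integrality as the three ingredients needed to make that step rigorous.
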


\textbf{Hardness of Approximation with Bounded Burning Sources:} If we put a  restriction on the  number of burning sources that can be used, then we can modify the above hardness proof to derive an  inapproximability result  on the number of burning steps.
  
To show the hardness of approximation, we first give a different NP-hardness proof for anywhere burning. Here we   
reduce the NP-hard problem planar exactly 3-bounded  3-SAT~\cite{DBLP:journals/combinatorica/MiddendorfP93}. The input of the problem is a 3-CNF formula where each variable appears in exactly 3 clauses, each clause contains at least two and at most three literals, and the corresponding SAT graph (a graph with clauses and variables as vertices, and clause-variable incidences as edges) is planar. The task is to decide whether there exists a truth value assignment for the variables that satisfies all the clauses.  

Let $I$ be an instance of the planar exactly 3-bounded 3-SAT and let $G$ be the corresponding SAT graph. We now show how to compute a point set  $P$ and an integer $b$ such that $P$ can be burned within $b$ steps if and only if $I$ admits an affirmative solution. Our construction is inspired by an  NP-hardness reduction for the $k$-center problem~\cite{DBLP:journals/siamcomp/MegiddoS84}, but contains nontrivial details due to variable sizes for the burning radii. We will use the concept of \emph{$\beta$-disk}, which is a disk of radius $\beta$.

We first compute an orthogonal planar drawing $D$ of $G$ where the vertices are represented as grid points and edges as orthogonal polylines (Figure~\ref{hard}(a)). Every planar graph with maximum degree three has such an orthogonal planar grid drawing inside a square of side length $\lfloor n/2 \rfloor$~\cite{DBLP:journals/algorithmica/Kant96}. We then scale up the drawing and replace the vertices with squares and edges with parallel orthogonal lines (Figure~\ref{hard}(b)). We will refer to this new representation for an edge as a  \emph{tunnel}. We ensure that each square is of side length 7 units and the pair of parallel line segments of a tunnel are 3 units apart. We then replace the square for each variable by joining the tunnels incident  to it (Figure~\ref{hard}(c)).

\textbf{Creating Points for Variables and Clauses: } We now add some points along the boundary of the tunnels as follows. Let $L$ be a polygonal line (determining a side of the tunnel) from a clause to another clause  (e.g., the  orange line Figure~\ref{hard}(c)). We place points from both ends such that no three points can be covered by a $1.10$-disk. The first point is placed at one unit distance from the boundary of the square representing the clause, and then each subsequent point is placed two units apart from the previous one. If the two sequences of points from the two ends of $L$ meet at a common point  (e.g., the point $q$ in Figure~\ref{hard}(c)), then nothing else needs to be done. If the two sequences   does not meet at a common point and a bend point  is available, then we create a new point instead of creating two points that are one unit apart (e.g., the point $s$ in Figure~\ref{hard}(c)). This ensures the property that no three points can be covered by a $1.10$-disk. Note that instead of modifying a bend, one can also create a `bump' on $L$ to ensure this property, as illustrated with the points $t,t'$ in Figure~\ref{hard}(c). 

For each square representing a clause, we add 2 points for each variable incident to it and an additional 4 points $q_0,q_1,q_2,q_3$, as illustrated in Figure~\ref{hard}(d). We refer to the points $q_0,q_1,q_2,q_3$ as the \emph{clause points}. Some black unit disks are drawn to illustrate  the configuration of these  points. The key property here is that no 1.10-disk can  cover all clause  points, but if we exclude one clause point among $\{q_1,q_2,q_3\}$, then the remaining clause points can be covered using a unit disk.  
 \begin{figure*}[pt]
    \centering
\includegraphics[width=.85\textwidth]{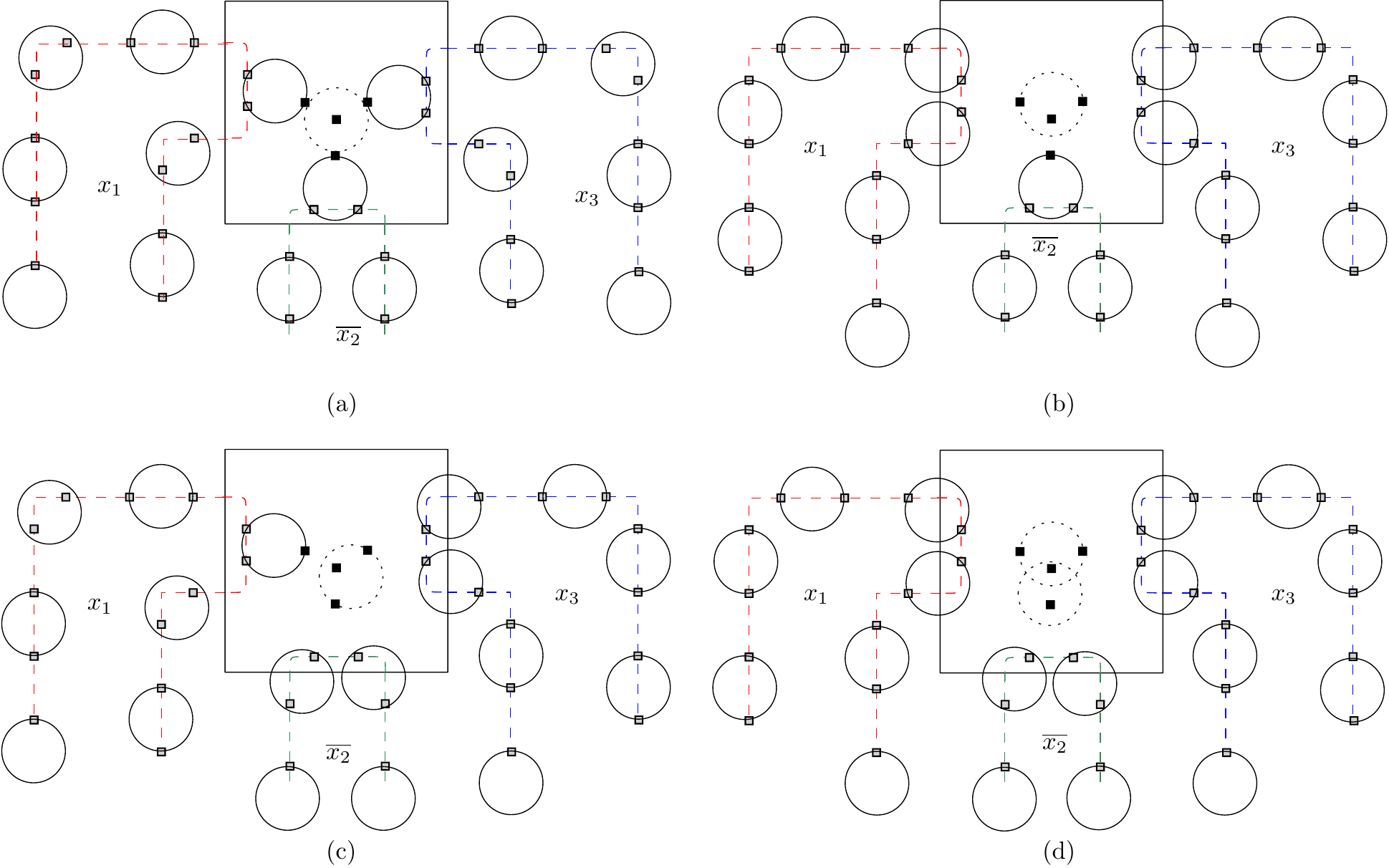}
    \caption{Illustration for the reduction. (a) $x_1=True$, $x_2=False$, $x_3=True$. (b) $x_1=False$, $x_2=False$, $x_3=False$.  (c) $x_1=True$, $x_2=True$, $x_3=False$. (d)  $x_1=False$, $x_2=True$, $x_3=False$. }
    \label{reduc}
\end{figure*}

Note that each variable now corresponds to a sequence of points forming a \emph{loop}. We create some more bumps to ensure that each variable contains an even number of points. This allows us to have two ways of covering the loop by using only unit disks by taking alternating pairs, as illustrated in Figures~\ref{hard}(e)--(f). Later, we will relate such covering to burning and if both variable-loop points inside the clause gadget are covered by the same  unit disk, then will set that literal to true. Therefore, we add some more points to ensure consistency. For example, assume that in Figures~\ref{hard}(e)--(f), the clauses $c_4$ and $c_3$ contain the literals $\overline{v_4}$ and   $v_4$, respectively.  We create a bump so that if both variable-loop points inside the gadget of $c_4$ are covered by a single unit disk, then  the two variable-loop points inside the gadget of $c_3$ will be covered by two different unit disks, and vice versa. 

Since the width and height of the drawing is of size $O(n)$, the total number of points is $O(n^2)$. We will denote by $N_v$ and $N_c$ the points that we created for the variables and clauses, respectively. 

\textbf{Creating Points to Accommodate Burning Process: } We now scale up the 
 drawing by $r$ units, where we set $r$ to be $10\left(\frac{|N_v|}{2}+\frac{|N_c|}{4}\right)$. 
 Let the resulting drawing be $D'$.  Consequently, all the above covering properties for unit disks and $1.10$-disks now hold for $r$-disks and $1.10r$-disks, respectively. 
 
 We now create $r$   points $w_i$, where $1\le i\le r$, along a horizontal line such that each point is far from the rest of the points by at least $3r$ units. 
 
 We will refer to the points created in this step as the \emph{outlier points} and denote them by $N_t$. Note that the points of $N_t$ lie  outside of $D'$. 

\textbf{From 3-SAT to Burning Number: } We now show that if the 3-SAT instance $I$ admits an affirmative solution, then the point set $(N_v\cup N_c \cup N_t)$ can be burned in $1.10r$ steps.

 In the first $0.10r$ steps we initiate $0.10r$ burning sources inside $D'$ and then initiate $r$ burning sources at the \emph{outlier points}. After this, the minimum radius of the burned area for any burning source started within $D'$ is at least $r$ and the maximum radius for such sources is $(1.10r-1)$. The burning sources inside $D'$ can be seen as $\beta$-disks where $\beta\in [r,1.10r]$. 
 
 For each true literal, we cover the corresponding two variable points and the nearest clause point by initiating a single burning source (e.g., Figures~\ref{reduc}(a)--(c)). We then burn the variable loops by initiating burning sources for alternating pairs of points. This takes $|N_v|/2$ burning sources. Since all clauses are satisfied, for each clause, at least one of the clause points from $\{q_1,q_2,q_3\}$ will be allocated to burn along with a pair of variable-loop points. Therefore, each clause now requires one burning source to ensure the burning of all its clause points. Hence the total number of burning sources we use within $D'$ is $0.10r = \left(\frac{|N_v|}{2}+\frac{|N_c|}{4}\right)$. The set $N_t$ contains $r$ points where no two of them can be covered by a $1.10r$-disk. It is straightforward to burn them in $r$ steps. Therefore, the total number of steps required is $1.10r$.

\textbf{From   Burning Number to 3-SAT: } We now show that if  the point set $(N_v\cup N_c \cup N_t)$ can be burned in $1.10r$ steps, then the 3-SAT instance $I$ admits an affirmative solution.

Since the set $N_t$ contains $r$ points where no two of them can be covered by a $1.10r$-disk, any burning sequence would need $r$ burning sources outside of $D'$. Since there are at most $1.10r$ steps to burn all the points, we are left with at most $0.10r = \left(\frac{|N_v|}{2}+\frac{|N_c|}{4}\right)$ burning sources inside $D'$. Note that none of these burning sources can have a radius larger than $(1.10r-1)$. By the construction of the variable-loop points, no three points can be covered by a $1.10r$-disk. Therefore, the variable-loop requires at least $|N_v|/2$ burning sources. If none of $q_1,q_2,q_3$ are burned along with the variable-loop points, then a clause gadget requires two burning sources (e.g., Figures~\ref{reduc}(d)).  Otherwise, each clause gadgets requires at least one burning source to ensure all clause points are burned even if $q_1,q_2,q_3$ are all burned along with the variable-loop points (e.g., Figures~\ref{reduc}(a)). Since there are $\frac{|N_c|}{4}$ clause gadgets and exactly that many burning sources remaining, each clause gadget will have exactly one from the remaining burning sources. Since a  $1.10r$-disk cannot cover all four clause points of a clause gadget, one of them must be burned together with a pair of variable-loop points. We set the corresponding literal to true. The construction of the variable-loop ensures the consistency of the truth value assignment for each variable at different clauses.

 


\textbf{Inapproximability Factor (with Bounded Burning Sources):} 
Assume that we are only allowed to initiate  $\left(\frac{|N_v|}{2}+\frac{|N_c|}{4}\right)+|N_t|$ fires. We change $r$ to be $10^\delta\left(\frac{|N_v|}{2}+\frac{|N_c|}{4}\right)$, where $\delta$ is a constant. We now can burn $(N_v\cup N_c \cup N_t)$ in $(1+10^{-\delta})r$ steps by first initiating $10^{-\delta}r = \left(\frac{|N_v|}{2}+\frac{|N_c|}{4}\right)$ burning sources inside $D'$
   and then   $r$ burning  sources to burn the points in $N_t$. Since our  reduction can be carried out with $1.10r$-disks, we can continue burning for $(1.10r-(1+10^{-\delta})r)$ more steps. Therefore, we obtain an  inapproximability factor of $\frac{1.10}{(1+10^{-\delta})}$, i.e., $(1.10-\varepsilon)$, where $\varepsilon$ can be made arbitrarily small by choosing a large value for $\delta$. 
   
    \begin{figure}[h]
    \centering
\includegraphics[width=0.3\textwidth]{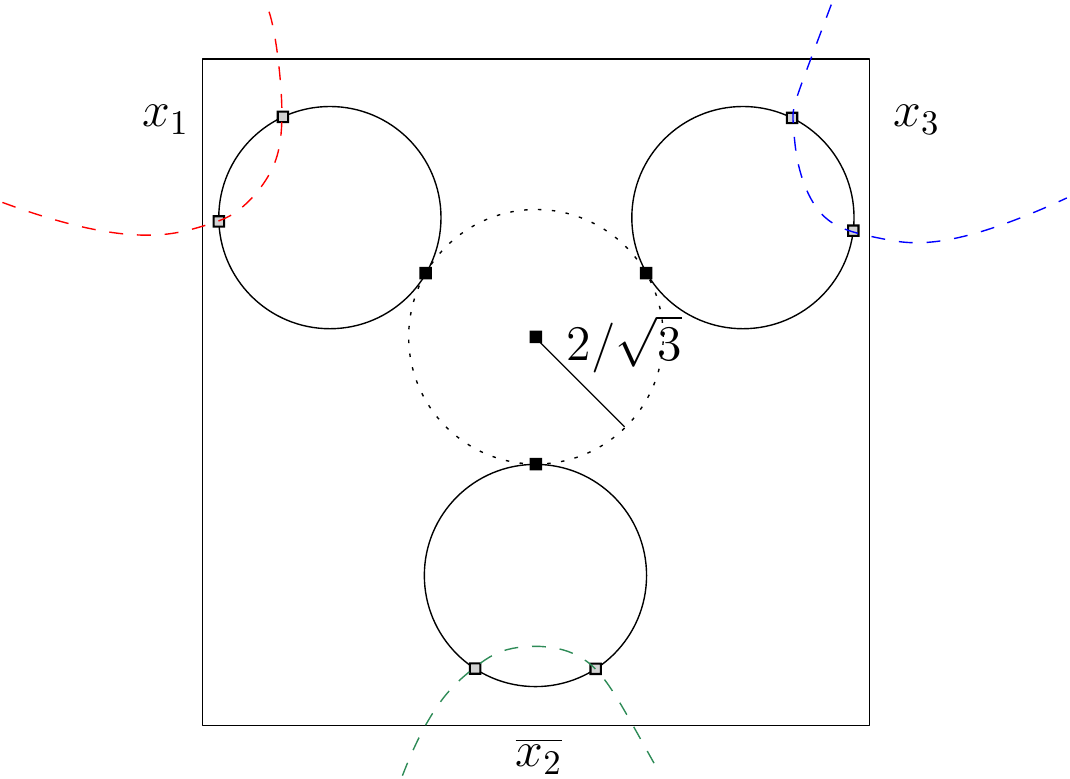}
    \caption{Illustration for the modified clause gadget. }
    \label{mod}
\end{figure}
Although for simplicity we used an orthogonal setting where variable loops  enter a clause gadget either horizontally or vertically, we could slightly change the construction using curves (similar to~\cite{DBLP:journals/siamcomp/MegiddoS84}) such that they make $120^\circ$ angles at the clause gadget (Figure~\ref{mod}). This allows us to carry out the reduction using a $\frac{2}{\sqrt{3}}$-disks and thus to have an inapproximability factor of $\frac{2}{\sqrt{3}}-\varepsilon$. 

\begin{cor}
The anywhere burning problem with a restriction on the number of burning sources that can be used is NP-hard to approximate with a factor of $\frac{2}{\sqrt{3}}-\varepsilon$, for every fixed $\varepsilon>0$.
\end{cor}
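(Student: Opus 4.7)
The plan is to push the NP-hardness reduction from planar exactly 3-bounded 3-SAT that was just constructed into a gap-producing reduction by (i) inflating the spatial scale as a function of a free constant $\delta$, and (ii) capping the total number of burning sources at exactly the value used by a "yes" instance. Concretely, I would reuse the point set $(N_v\cup N_c\cup N_t)$ verbatim but set the tunnel scaling to $r=10^{\delta}\left(\frac{|N_v|}{2}+\frac{|N_c|}{4}\right)$, and then declare the decision problem to be: does there exist a burning of $(N_v\cup N_c\cup N_t)$ using at most $K:=\left(\frac{|N_v|}{2}+\frac{|N_c|}{4}\right)+|N_t|$ sources and at most a given number of steps?

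Next I would argue the two sides of the gap using the geometric work already in place. For the "yes" side, a satisfying assignment supplies $\frac{|N_v|}{2}+\frac{|N_c|}{4}$ unit-disk covers inside $D'$ plus $|N_t|$ sources for the well-separated outlier points; igniting the interior sources first during the first $10^{-\delta}r$ steps and then the outliers produces a valid burning in $(1+10^{-\delta})r$ steps while respecting the budget $K$. For the "no" side, the counting argument from the reduction shows that under the cap $K$ any successful burning must employ a burning source whose final radius strictly exceeds $r$; combined with the earlier geometric claim that a $1.10r$-disk cannot cover an entire clause gadget plus the adjoining variable-loop pair, no burning can terminate within $1.10r-1$ steps. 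The resulting ratio $\frac{1.10}{1+10^{-\delta}}$ tends to $1.10$ as $\delta\to\infty$, while the construction remains polynomial in $n$ for every fixed $\delta$, yielding a $(1.10-\varepsilon)$ inapproximability factor.

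To strengthen $1.10$ to $\frac{2}{\sqrt{3}}$ I would redraw the gadgets so that each clause is entered by three curved tunnels meeting at $120^\circ$, following the Megiddo--Supowit style of the $k$-center reduction. In this geometry the three outer clause points become the vertices of an equilateral triangle with side length $2$, whose circumradius is exactly $\frac{2}{\sqrt{3}}$; any disk that covers all four clause points must have radius strictly greater than $\frac{2}{\sqrt{3}}$, while any three of them (together with the adjacent variable-loop pair corresponding to a satisfied literal) are still coverable by a unit disk. Reapplying the previous gap argument with $1.10r$ replaced by $\frac{2}{\sqrt{3}}\,r$ then delivers the claimed $\frac{2}{\sqrt{3}}-\varepsilon$ bound. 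The main obstacle I anticipate is re-verifying, under the non-orthogonal layout, the three separation properties that drove the earlier reduction: that no three consecutive variable-loop points fit inside a $\frac{2}{\sqrt{3}}\,r$-disk along a curved tunnel, that the parity bumps still force the two alternating covers of each variable loop to be the only options, and that the clause-gadget property (all four uncoverable, any three coverable) survives once the incoming angles change. Once these are checked — elementary but careful computations with equilateral and near-equilateral triangles inscribed in a $\frac{2}{\sqrt{3}}\,r$-disk — the combinatorial half of the argument transfers unchanged, and the corollary follows.
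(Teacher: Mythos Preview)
Your proposal is correct and follows essentially the same route as the paper: rescale with $r=10^{\delta}\bigl(\tfrac{|N_v|}{2}+\tfrac{|N_c|}{4}\bigr)$, cap the number of sources at $K=\bigl(\tfrac{|N_v|}{2}+\tfrac{|N_c|}{4}\bigr)+|N_t|$, derive the $(1+10^{-\delta})r$ versus $1.10r$ gap, and then pass to the $120^\circ$ Megiddo--Supowit clause layout to lift $1.10$ to $\tfrac{2}{\sqrt{3}}$. One tiny slip: in the equilateral layout the smallest disk covering all four clause points has radius exactly $\tfrac{2}{\sqrt{3}}$, not strictly greater, but since the target factor is $\tfrac{2}{\sqrt{3}}-\varepsilon$ this changes nothing.
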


\section{Conclusion}

In this paper, we  introduced two burning processes --- point burning and anywhere burning --- to burn a set of points in the Euclidean plane. We proved that computing the burning number for these processes are NP-complete and gave approximation algorithms for them. We showed that inapproximability results can be derived for anywhere burning if only a restricted number of burning sources are allowed. Hence a natural future research  direction to explore is to design faster approximation algorithms for computing the burning number as well as to establish better inapproximability results.  

\small
\bibliographystyle{abbrv}
\bibliography{ref}

\begin{thebibliography}{10}

\bibitem{Bessy2017}
S.~Bessy, A.~Bonato, J.~C.~M. Janssen, D.~Rautenbach, and E.~Roshanbin.
\newblock Burning a graph is hard.
\newblock {\em Discret. Appl. Math.}, 232:73--87, 2017.

\bibitem{DBLP:journals/cdm/Bonato21}
A.~Bonato.
\newblock A survey of graph burning.
\newblock {\em Contributions Discret. Math.}, 16(1):185--197, 2021.

\bibitem{DBLP:journals/gc/BonatoGS20}
A.~Bonato, K.~Gunderson, and A.~Shaw.
\newblock Burning the plane.
\newblock {\em Graphs Comb.}, 36(5):1311--1335, 2020.

\bibitem{DBLP:conf/waw/BonatoJR14}
A.~Bonato, J.~C.~M. Janssen, and E.~Roshanbin.
\newblock Burning a graph as a model of social contagion.
\newblock In A.~Bonato, F.~C. Graham, and P.~Pralat, editors, {\em Algorithms
  and Models for the Web Graph (WAW)}, volume 8882 of {\em LNCS}, pages 13--22.
  Springer, 2014.

\bibitem{Bonato2016}
A.~Bonato, J.~C.~M. Janssen, and E.~Roshanbin.
\newblock How to burn a graph.
\newblock {\em Internet Mathematics}, 12(1-2):85--100, 2016.

\bibitem{bonato2019approximation}
A.~Bonato and S.~Kamali.
\newblock Approximation algorithms for graph burning.
\newblock In {\em International Conference on Theory and Applications of Models
  of Computation}, pages 74--92. Springer, 2019.

\bibitem{DBLP:conf/icalp/ChakrabartyGK16}
D.~Chakrabarty, P.~Goyal, and R.~Krishnaswamy.
\newblock The non-uniform k-center problem.
\newblock In I.~Chatzigiannakis, M.~Mitzenmacher, Y.~Rabani, and D.~Sangiorgi,
  editors, {\em Proc. of the 43rd International Colloquium on Automata,
  Languages, and Programming (ICALP)}, volume~55 of {\em LIPIcs}, pages
  67:1--67:15, 2016.

\bibitem{das2018burning}
S.~Das, S.~R. Dev, A.~Sadhukhan, U.~kant Sahoo, and S.~Sen.
\newblock Burning spiders.
\newblock In {\em Conference on Algorithms and Discrete Applied Mathematics},
  pages 155--163. Springer, 2018.

\bibitem{w2022}
W.~Evans and R.~Lin.
\newblock The polygon burning problem.
\newblock In {\em International Conference and Workshops on Algorithms and
  Computation}, LNCS, pages 123--134. Springer, 2022.

\bibitem{DBLP:journals/access/Garcia-DiazSRC22}
J.~Garcia{-}Diaz, J.~C.~P. Sansalvador, L.~M.
  Rodr{\'{\i}}guez{-}Henr{\'{\i}}quez, and J.~A. Cornejo{-}Acosta.
\newblock Burning graphs through farthest-first traversal.
\newblock {\em {IEEE} Access}, 10:30395--30404, 2022.

\bibitem{DBLP:conf/caldam/GuptaLM21}
A.~T. Gupta, S.~A. Lokhande, and K.~Mondal.
\newblock Burning grids and intervals.
\newblock In A.~Mudgal and C.~R. Subramanian, editors, {\em Proc. of the 7th
  International Conference on Algorithms and Discrete Applied Mathematics
  (CALDAM)}, volume 12601 of {\em LNCS}, pages 66--79. Springer, 2021.

\bibitem{DBLP:journals/algorithmica/Kant96}
G.~Kant.
\newblock Drawing planar graphs using the canonical ordering.
\newblock {\em Algorithmica}, 16(1):4--32, 1996.

\bibitem{DBLP:conf/iwpec/KobayashiO20}
Y.~Kobayashi and Y.~Otachi.
\newblock Parameterized complexity of graph burning.
\newblock In Y.~Cao and M.~Pilipczuk, editors, {\em Proc. of the 15th
  International Symposium on Parameterized and Exact Computation (IPEC)},
  volume 180 of {\em LIPIcs}, pages 21:1--21:10. Schloss Dagstuhl -
  Leibniz-Zentrum f{\"{u}}r Informatik, 2020.

\bibitem{liu2021burning}
H.~Liu, X.~Hu, and X.~Hu.
\newblock Burning numbers of path forests and spiders.
\newblock {\em Bulletin of the Malaysian Mathematical Sciences Society},
  44(2):661--681, 2021.

\bibitem{DBLP:journals/siamcomp/MegiddoS84}
N.~Megiddo and K.~J. Supowit.
\newblock On the complexity of some common geometric location problems.
\newblock {\em {SIAM} J. Comput.}, 13(1):182--196, 1984.

\bibitem{DBLP:journals/combinatorica/MiddendorfP93}
M.~Middendorf and F.~Pfeiffer.
\newblock On the complexity of the disjoint paths problems.
\newblock {\em Comb.}, 13(1):97--107, 1993.

\bibitem{DBLP:conf/walcom/MondalP0R21}
D.~Mondal, N.~Parthiban, V.~Kavitha, and I.~Rajasingh.
\newblock Apx-hardness and approximation for the k-burning number problem.
\newblock In R.~Uehara, S.~Hong, and S.~C. Nandy, editors, {\em Proc. of the
  15th International Conference and Workshops on Algorithms and Computation
  (WALCOM)}, volume 12635 of {\em LNCS}, pages 272--283. Springer, 2021.

\bibitem{DBLP:journals/dcg/MustafaR10}
N.~H. Mustafa and S.~Ray.
\newblock Improved results on geometric hitting set problems.
\newblock {\em Discret. Comput. Geom.}, 44(4):883--895, 2010.

\bibitem{DBLP:conf/waoa/NiebergH05}
T.~Nieberg and J.~L. Hurink.
\newblock A {PTAS} for the minimum dominating set problem in unit disk graphs.
\newblock In T.~Erlebach and G.~Persiano, editors, {\em Proc. of Approximation
  and Online Algorithms (WAOA)}, volume 3879 of {\em LNCS}, pages 296--306.
  Springer, 2005.

\bibitem{Sim2017}
K.~Sim, T.~S. Tan, and K.~Wong.
\newblock On the burning number of generalized petersen graphs.
\newblock {\em Bulletin of the Malaysian Mathematical Sciences Society}, 41, 11
  2017.

\end{thebibliography}





\end{document}